\documentclass[
  aps,
  prl,
  reprint,
  superscriptaddress,
  longbibliography,
  nofootinbib
]{revtex4-2}

\usepackage{
  amsmath,
  amssymb,
  amsthm,
  mathtools,
  bm,
  booktabs,
  array
}
\usepackage{tikz}
\usepackage{hyperref}
 \hypersetup{colorlinks,linkcolor={blue},citecolor={red},urlcolor={blue}}
\usepackage{tikz,float,multirow,booktabs,enumerate}
\newcommand{\AME}{\operatorname{AME}}
\newcommand{\Tr}{\operatorname{Tr}}
\newcommand{\Z}{\mathbb Z}
\newcommand{\F}{\mathbb F}
\newcommand{\C}{\mathbb C}
\newcommand{\one}{\mathbf 1}
\newcommand{\im}{\operatorname{im}}
\newcommand{\supp}{\operatorname{supp}}
\newcommand{\wt}{\operatorname{wt}}

\newcommand{\zero}{\mathbf 0}

\newtheorem{theorem}{Theorem}

\begin{document}
\hbadness=10000

\title{Complete Existence Classification of Seven-Partite Absolutely Maximally Entangled States}
\author{Fei Shi}
   \email[]{shif26@mail.sysu.edu.cn}
	\affiliation{School of Computer Science and Engineering, Sun Yat-sen University, Guangzhou 510006, China}	

\author{Xiande Zhang}
   \email[]{drzhangx@ustc.edu.cn}
\affiliation{School of Mathematical Sciences,
		University of Science and Technology of China, Hefei, 230026, China; and with Hefei National Laboratory, University of Science and Technology of China, Hefei 230088, China}
        
\author{Qi Zhao}
\email[]{zhaoqi@cs.hku.hk}
 \affiliation{QICI Quantum Information and Computation Initiative, School of Computing and Data Science,
The University of Hong Kong, Pokfulam Road, Hong Kong SAR, China}

\author{Lvzhou Li}
\email[]{lilvzh@mail.sysu.edu.cn} 
	\affiliation{School of Computer Science and Engineering, Sun Yat-sen University, Guangzhou 510006, China}


\begin{abstract}
We prove that an absolutely maximally entangled state of seven qudits
exists if and only if the local dimension satisfies $d\geq 3$. Prior to
this work, to the best of our knowledge, $\operatorname{AME}(7,d)$
states were known to exist only when $d$ is a prime power other than
$2$, or when $d$ can be expressed as a product of dimensions for which
existence was already known. Since it has been proved that no
$\operatorname{AME}(7,2)$ state exists, it remains to establish
existence for all $d\geq 3$. We construct cyclic quadratic-phase states
for every odd local dimension and develop a coupled
binary--odd-dimensional construction for every dimension congruent to
$2$ modulo $4$. Together with the known power-of-two cases and the
product property of AME states, these constructions cover every local
dimension $d\geq 3$.
\end{abstract}
\maketitle

\textit{\textbf{Introduction.}}\textbf{---}
A pure state is called absolutely maximally entangled (AME) if it is maximally entangled across every bipartition, equivalently, if every reduced state involving at most half of the parties is maximally mixed~\cite{Scott2004,Helwig2012}. AME states correspond to pure quantum MDS codes~\cite{Scott2004,HuberGrassl2020}: specifically, an $n$-qudit AME state, denoted by $\AME(n,d)$, is equivalent to a pure quantum MDS code with parameters $[[n,0,\lfloor n/2\rfloor+1]]_d$. AME states have found applications in quantum secret sharing and parallel teleportation~\cite{Helwig2012}, holographic quantum error-correcting codes~\cite{Pastawski2015,Mazurek2020}, and quantum repeaters~\cite{AlsinaRazavi2021}. The existence problem of AME states has therefore attracted considerable attention~\cite{HiguchiSudbery2000,Feng2017,Huber2017,PRXQuantum.3.010101,Rather2022,AMEtable,RajchelReview2026}.

For example, Rather \emph{et al.} constructed an $\AME(4,6)$ state from entangled quantum Latin squares~\cite{Rather2022}, while Ball and Simoens proved that product-orthogonal quantum Latin squares of order six do not exist, demonstrating that entanglement is essential for this construction~\cite{BallSimoens2026}. For seven parties, $\AME(7,2)$ is known not to exist~\cite{Huber2017}; however, a complete existence classification of $\AME(7,d)$ states was still lacking.  To the best of our knowledge, prior to this work, the existence of AME states had been established only for local dimensions $d$ that are prime powers other than $2$, or that can be expressed as products of dimensions for which AME states are already known to exist. The remaining dimensions were unresolved; in particular, the existence of $\AME(7,6)$ and $\AME(7,10)$ was listed as open in Refs.~\cite{AMEtable,RajchelReview2026}.

In this Letter, we establish the complete classification
\begin{equation}
\boxed{\AME(7,d)\ \text{exists if and only if}\ d\geq3.}
\label{eq:main}
\end{equation}
The key ingredients are explicit uniform constructions that work in every odd dimension and every dimension congruent to $2$ modulo $4$.


\textit{\textbf{Notation.}}\textbf{---}
All subsets of $\Z_7$ are written in increasing order unless a translated order is explicitly specified.  Vectors indexed by a subset $A$ are row vectors; $v_A$ denotes restriction to $A$, and $\zero_A$ denotes the zero row vector on $A$.  For subsets $A,B\subseteq\Z_7$, $\Gamma_{A,B}$ is the submatrix with rows indexed by $A$ and columns indexed by $B$.  We write $\supp(v)$ and $\wt(v)$ for the support and Hamming weight of a binary vector, respectively.  Compact strings such as $013$ denote the ordered subset $(0,1,3)$.  Scalar zero is written $0$, whereas zero vectors and zero operators are written $\zero$ with a subscript when useful.

\textit{\textbf{Odd local dimensions.}}\textbf{---}
Label the parties by $\Z_7$ and let $\Gamma$ be the adjacency matrix of the complement of the seven-cycle,
\begin{equation}
 \Gamma_{ij}=\begin{cases}
 1,&i-j\not\equiv0,\pm1\pmod7,\\
 0,&i-j\equiv0,\pm1\pmod7.
 \end{cases}
 \label{eq:Gamma}
\end{equation}
In the ordered basis $\Z_7=(0,1,\ldots,6)$, this matrix is
\[
\Gamma=\begin{pmatrix}
0&0&1&1&1&1&0\\
0&0&0&1&1&1&1\\
1&0&0&0&1&1&1\\
1&1&0&0&0&1&1\\
1&1&1&0&0&0&1\\
1&1&1&1&0&0&0\\
0&1&1&1&1&0&0
\end{pmatrix}.
\]
Let $\widetilde\Gamma=(\widetilde\Gamma_{i,j})$ with $\widetilde\Gamma_{i,j}=\Gamma_{i,j}$ for $i<j$ and $0$ otherwise.
For the row vector $x=(x_0,\ldots,x_6)$, define
\begin{equation}
 Q(x)=x\widetilde\Gamma x^{\mathsf T}
 =\sum_{0\leq i<j\leq6}\Gamma_{ij}x_ix_j\pmod q.
 \label{eq:Q}
\end{equation}
and define
\begin{equation}
 |G_q\rangle=q^{-7/2}\sum_{x\in\Z_q^7}\omega_q^{Q(x)}|x\rangle,
 \qquad \omega_q=e^{2\pi i/q}.
 \label{eq:oddstate}
\end{equation}
According to Theorem~3 of Ref.~\cite{Feng2017}, $\Tr_{S^c}|G_q\rangle\langle G_q|=I_{q^3}/q^3$ for every three-subset $S$ if, for every three-subset $S$, some $3\times 3$ submatrix $\Gamma_{S,R}$ with $R\subseteq S^c$ is invertible over $\Z_q$. Then we need to find $35$ such submatrices. Since \(\Gamma_{S+t,R+t} =\Gamma_{S,R}\) holds for all \(t\in\mathbb{Z}_7\), where $S+t=(s+t)_{s\in S}$, it suffices to consider five representative sets $S$: $012$, $013$, $014$, $015$, and $024$. Assigning the corresponding sets $R$ to be $346$, $245$, $236$, $234$ and $135$ respectively, the matrices \(\Gamma_{S,R}\) are given by
\begin{equation}
\left(\begin{smallmatrix}1&1&0\\1&1&1\\0&1&1\end{smallmatrix}\right),\
\left(\begin{smallmatrix}1&1&1\\0&1&1\\0&0&1\end{smallmatrix}\right),\
\left(\begin{smallmatrix}1&1&0\\0&1&1\\1&0&1\end{smallmatrix}\right),\
\left(\begin{smallmatrix}1&1&1\\0&1&1\\1&1&0\end{smallmatrix}\right),\
\left(\begin{smallmatrix}0&1&1\\0&0&1\\1&0&0\end{smallmatrix}\right).
\end{equation}
Their determinants are $-1$, $1$, $2$, $-1$ and $1$ in the listed order. Consequently, each \(\Gamma_{S,R}\) is invertible over \(\mathbb{Z}_q\), which implies that 
\begin{equation}
 |G_q\rangle\ \text{is an }\AME(7,q)\text{ state}\qquad(q\geq3\text{ odd}).
 \label{eq:oddresult}
\end{equation}

\textit{\textbf{Twice-odd local dimensions.}}\textbf{---}
Let $C\subseteq\Z_2^7$ be the $[7,3,4]$ simplex code generated by
\begin{equation}
H=\begin{pmatrix}
1&0&1&1&1&0&0\\
1&1&1&0&0&1&0\\
0&1&1&1&0&0&1
\end{pmatrix},
\qquad C=\{mH:m\in\Z_2^3\}.
\label{eq:Gsim}
\end{equation}
The logical basis of the Steane code~\cite{Steane1996} is
\begin{equation}
 |0_L\rangle=8^{-1/2}\sum_{u\in C}|u\rangle,
 \qquad
 |1_L\rangle=8^{-1/2}\sum_{u\in C}|u+\one\rangle.
 \label{eq:logical}
\end{equation}
The eight simplex words and their complementary coset representatives are
listed in Table~\ref{tab:binarywords}.  
\begin{table}[t]
\caption{The simplex code $C$ and the coset $C+\one$. Coordinates are
indexed by $0,\ldots,6$.}
\label{tab:binarywords}
\centering\small
\setlength{\tabcolsep}{5.0pt}
\begin{ruledtabular}
\begin{tabular}{c c|c c}
$u$ & $u+\one$ & $u$ & $u+\one$\\
\hline
0000000&1111111&0010111&1101000\\
0111001&1000110&1001011&0110100\\
1110010&0001101&1100101&0011010\\
1011100&0100011&0101110&1010001
\end{tabular}
\end{ruledtabular}
\end{table}
The union $D=C\cup(C+\one)$ is the $[7,4,3]$ Hamming code, and
$D^\perp=C$ with distance $4$. 
Fix a three-subset $S$ and put
$B=S^c$.  For the same cut $S|B$, define the logical reduction blocks
\begin{equation}
 R_{ab}^S:=\Tr_B|a_L\rangle\langle b_L|,
 \qquad a,b\in\Z_2.
 \label{eq:logicalblocks}
\end{equation}
Since the distance of $D^\perp$ is $4$, the projection of $D$ onto any three coordinates
is uniform: each element of $\Z_2^3$ occurs exactly two
times \cite{hedayat1999orthogonal}. Moreover, since the distance of $C$ and $(C+\one)$ are both $4$, we obtain that 
\begin{equation}
 \tfrac12(R_{00}^S+R_{11}^S)
 =\frac{I_{2^3}}{2^3},
 \qquad |S|=3.
 \label{eq:binarydiag}
\end{equation}
For the off-diagonal logical block, let $r,s\in\Z_2^S$.  Then
\begin{equation}
\begin{aligned}
 [R_{01}^S]_{r,s}=\frac{1}{2^3}\#\bigl\{(u,v)\in C^2:\;&u_S=r,
 \ (v+\one)_S=s,\\[-1mm]
 &u_B=(v+\one)_B\bigr\}.
\end{aligned}
 \label{eq:binarycrosscount}
\end{equation}
Suppose that $(u,v)\in C^2$ contributes to a matrix element in
Eq.~\eqref{eq:binarycrosscount}.  The partial-trace condition gives
$u_B=(v+\one)_B$.  Set
$w:=u+(v+\one)=\one+u+v$.  Since $C$ is linear,
$u+v\in C$, and hence $w\in C+\one$.  Moreover,
the equality on $B$ implies $w_B=\zero_B$, so
$\supp(w)\subseteq S$.

\begin{figure}[t]
   \centering
   \includegraphics[scale=0.1]{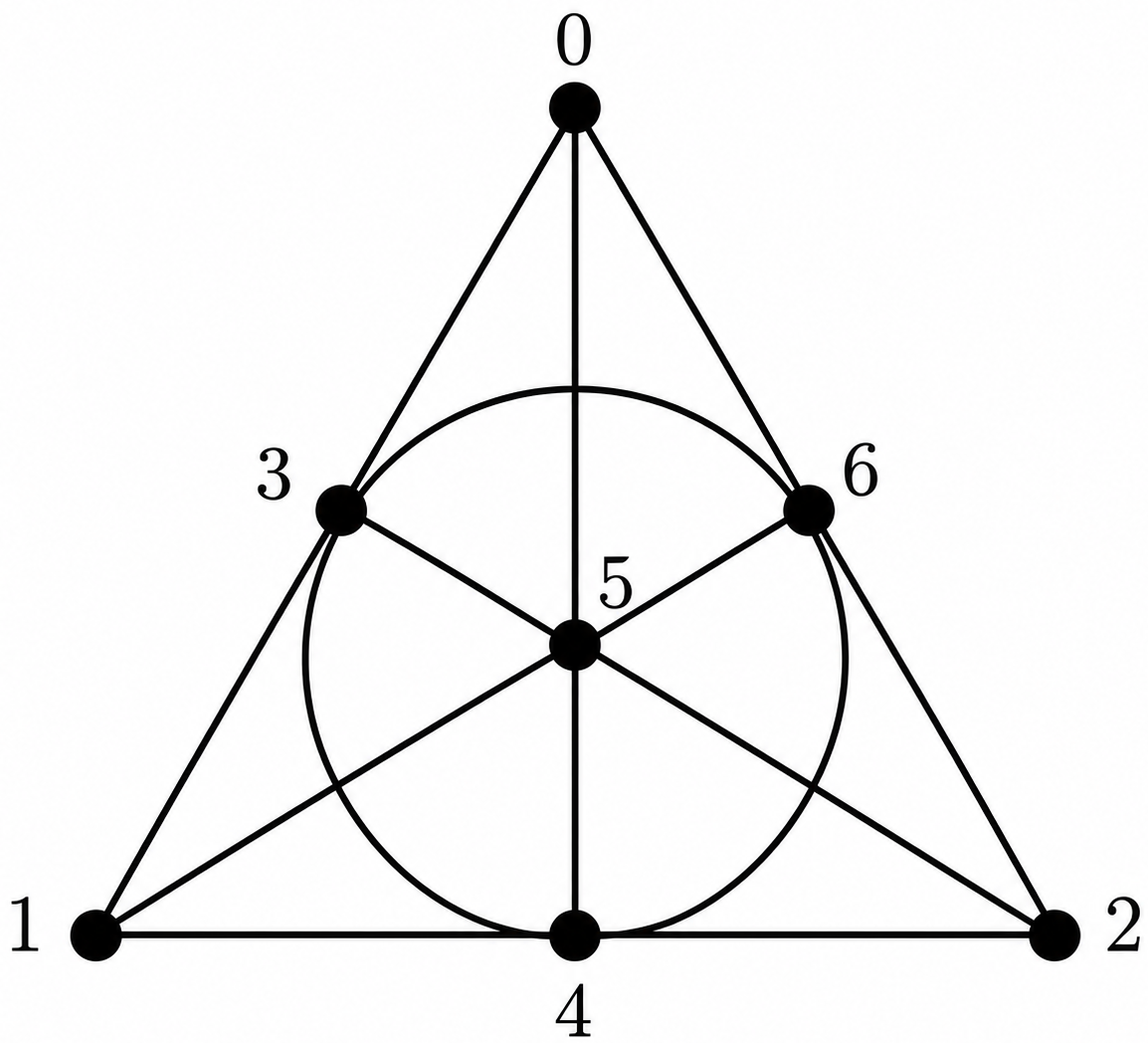}
\caption{The Fano plane in the cyclic labeling used here. Its seven triples are $L_t=\{t,t+1,t+3\}$, $t\in\Z_7$.}
\label{fig:fano}
\end{figure}

Conversely, let $w\in C+\one$ satisfy
$\supp(w)\subseteq S$.  Write $w=\one+c$ with $c\in C$.
For any $u\in C$, set $v=u+c\in C$.  Then
$\one+u+v=w$, and $w_B=\zero_B$ gives
$u_B=(v+\one)_B$.  Thus $(u,v)$ contributes to the matrix element
with retained indices $r=u_S$ and $s=(v+\one)_S$.

The coset $C+\one$ contains seven words of weight three and one word of
weight seven.  Since $|S|=3$, a word in $C+\one$ can have support contained
in $S$ only if it has weight three; in that case its support equals $S$.
The seven weight-three supports are precisely the seven triples of the Fano plane
\begin{equation}
 L_t=\{t,t+1,t+3\}\pmod7,
 \qquad t\in\Z_7,
 \label{eq:Fano}
\end{equation}
shown in Fig.~\ref{fig:fano}; comparison with Table~\ref{tab:binarywords} gives $\supp(w_t)=L_t$ for the seven weight-three coset words.  Since $R_{10}^S=(R_{01}^S)^\dagger$, the support characterization above implies the explicit restriction
\begin{equation}
 R_{01}^S=R_{10}^S=\zero
 \qquad\text{unless } S=L_t \text{ for some }t\in\Z_7.
 \label{eq:binary-cross-support-restriction}
\end{equation}
Thus the binary cross blocks vanish on all $35-7=28$ non-Fano three-subsets.  On a cut $S=L_t$ coming from the Fano plane, Eq.~\eqref{eq:binary-cross-support-restriction} does not assert vanishing; those seven remaining cuts are cancelled by the odd-dimensional sector below.

For odd $q$, let the integer row vector $c=(1,-1,1,0,0,0,0)$ be reduced modulo $q$, and define
\begin{equation}
 |G_a^{(q)}\rangle=q^{-7/2}\sum_{x\in\Z_q^7}
 \omega_q^{Q(x)+a c x^{\mathsf T}}|x\rangle,
 \qquad a\in\Z_2.
 \label{eq:shifted}
\end{equation}
To make the diagonal reductions explicit, define the one-qudit phase unitary
\begin{equation}
 Z_q(\lambda):=\sum_{z\in\Z_q}\omega_q^{\lambda z}|z\rangle\!\langle z|,
 \qquad
 U_a:=\bigotimes_{j=0}^{6} Z_q(a c_j).
 \label{eq:localphase}
\end{equation}
Then Eq.~\eqref{eq:shifted} is equivalently
$|G_a^{(q)}\rangle=U_a|G_q\rangle$.  For the same cut $S|B$, define
\begin{equation}
 T_{ab}^S:=\Tr_B|G_a^{(q)}\rangle\langle G_b^{(q)}|,
 \qquad a,b\in\Z_2.
 \label{eq:graphblocks}
\end{equation}
Write $U_a=U_{a,S}\otimes U_{a,B}$.  Since $|G_q\rangle$ is an
$\AME(7,q)$ state, $|G_a^{(q)}\rangle$ is also an $\AME(7,q)$ state, which implies
\begin{equation}
    T_{aa}^S
 =\frac{I_{q^3}}{q^3},
 \qquad a\in\Z_2.
 \label{eq:diagonalgraphblocks}
\end{equation}
To analyze the cross block $T_{01}^S$, fix a cut $S|B$, $B=S^c$, and write a basis string as $x=(r,y)$, where $r\in\Z_q^S$ and $y\in\Z_q^B$ are row vectors. For $A\subseteq\Z_7$ and a row vector $z\in\Z_q^A$, let
\begin{equation}
 Q_A(z):=\sum_{\substack{i<j\\ i,j\in A}}\Gamma_{ij}z_i z_j\pmod q.
\end{equation}
Then we have
\begin{equation}Q(x)=Q(r,y)=Q_S(r)+Q_B(y)+r\Gamma_{S,B}y^{\mathsf T}.
\end{equation}
For $r,s\in\Z_q^S$, direct partial tracing then gives
\begin{align}
 [T_{01}^S]_{r,s}
 &=q^{-7}\omega_q^{Q_S(r)-Q_S(s)-c_S s^{\mathsf T}}\notag\\[-1mm]
 &\quad\times\sum_{y\in\Z_q^B}
 \omega_q^{\bigl((r-s)\Gamma_{S,B}-c_B\bigr)y^{\mathsf T}},
 \label{eq:cross}
\end{align}
where the term $Q_B(y)$ cancels between ket and bra. By character orthogonality, the sum can be nonzero only if
\begin{equation}
 (r-s)\Gamma_{S,B}=c_B,
 \qquad\text{equivalently}\qquad
 \Gamma_{B,S}(r-s)^{\mathsf T}=c_B^{\mathsf T}.
 \label{eq:crosscondition}
\end{equation}
Consequently, $T_{01}^S=\zero$ whenever $c_B^{\mathsf T}\notin\im(\Gamma_{B,S})$.

For $t\in\Z_7$, define the integer row vector $h_t$ as the cyclic shift of
$h_0=(0,0,1,0,0,-1,1)$ by
\begin{equation}
 (h_t)_j=(h_0)_{j-t\,\bmod 7}.
 \label{eq:cyclicshift-h}
\end{equation}
Since $h_0$ vanishes on $L_0=\{0,1,3\}$, the shifted vector $h_t$
vanishes on $L_t=L_0+t=\{t,t+1,t+3\}$.  Equivalently,
\begin{equation}
 (h_t)_{L_t}=\zero.
 \label{eq:ht-vanishes}
\end{equation}
Put $B_t=L_t^c$.  For $t=0$, use the ordered sets
$L_0=(0,1,3)$ and $B_0=(2,4,5,6)$.  Then
\begin{equation}
 (h_0)_{B_0}=(1,0,-1,1),
 \qquad
 \Gamma_{B_0,L_0}=
 \begin{pmatrix}
 1&0&0\\
 1&1&0\\
 1&1&1\\
 0&1&1
 \end{pmatrix},
 \label{eq:leftnull-t0-data}
\end{equation}
and hence
\begin{equation}
 (h_0)_{B_0}\Gamma_{B_0,L_0}
 =\zero.
 \label{eq:leftnull-t0}
\end{equation}
Because $\Gamma$ is circulant, translating both row and column indices
by $t$ transports this identity to every triple $L_t$ of the Fano plane:
\begin{equation}
 (h_t)_{B_t}\Gamma_{B_t,L_t}=\zero.
 \label{eq:leftnull-all}
\end{equation}
Thus $(h_t)_{B_t}$ is a nonzero left-null vector of
$\Gamma_{B_t,L_t}$.  On the other hand,
\begin{equation}
 \bigl(h_tc^{\mathsf T}\bigr)_{t=0}^6
 =(1,1,-2,2,-1,1,-1),
 \label{eq:witness}
\end{equation}
Decomposing the scalar product according to
$\mathbb Z_7=L_t\cup B_t$ and using
$(h_t)_{L_t}=\zero$ gives
\begin{equation}
\begin{aligned}
 h_tc^{\mathsf T}
 &=(h_t)_{L_t}c_{L_t}^{\mathsf T}
   +(h_t)_{B_t}c_{B_t}^{\mathsf T}\\
 &=(h_t)_{B_t}c_{B_t}^{\mathsf T}.
\end{aligned}
\label{eq:witness-block-decomp}
\end{equation}
Every entry in
Eq.~\eqref{eq:witness} is nonzero modulo an odd $q$.  The seven integer
certificates are displayed in Table~\ref{tab:fanowitness}; the coordinates
of $(h_t)_{B_t}$ are ordered according to the listed ordered complement $B_t$.
\begin{table}[t]
\caption{Fano-cut image-exclusion certificates.  In every
row, $(h_t)_{B_t}\Gamma_{B_t,L_t}=\zero$.}\label{tab:fanowitness}
\centering\small
\setlength{\tabcolsep}{5.0pt}
\begin{ruledtabular}
\begin{tabular}{c c c c c}
$t$ & $L_t$ & $B_t$ & $(h_t)_{B_t}$ & $h_tc^{\mathsf T}$\\
\hline
0 & 013 & 2456 & $(1,0,-1,1)$ & $1$\\
1 & 124 & 0356 & $(1,1,0,-1)$ & $1$\\
2 & 235 & 0146 & $(-1,1,1,0)$ & $-2$\\
3 & 346 & 0125 & $(0,-1,1,1)$ & $2$\\
4 & 045 & 1236 & $(0,-1,1,1)$ & $-1$\\
5 & 156 & 0234 & $(1,0,-1,1)$ & $1$\\
6 & 026 & 1345 & $(1,0,-1,1)$ & $-1$
\end{tabular}
\end{ruledtabular}
\end{table}
For instance, Eqs.~\eqref{eq:leftnull-t0-data} and
\eqref{eq:leftnull-t0} give the $t=0$ certificate explicitly, while
$h_0c^{\mathsf T}=1$.
To prove image exclusion, suppose instead that
$c_{B_t}^{\mathsf T}=\Gamma_{B_t,L_t}z^{\mathsf T}$ for some row vector
$z\in\Z_q^{L_t}$.  Multiplying by the left-null vector gives
\begin{equation}
 h_tc^{\mathsf T}=(h_t)_{B_t}c_{B_t}^{\mathsf T}
 =(h_t)_{B_t}\Gamma_{B_t,L_t}z^{\mathsf T}=0\pmod q,
 \label{eq:imagecontradiction}
\end{equation}
which contradicts Eq.~\eqref{eq:witness}.  Hence
$c_{B_t}^{\mathsf T}\notin\im(\Gamma_{B_t,L_t})$ and
\begin{equation}
 T_{01}^{L_t}=\Tr_{B_t}|G_0^{(q)}\rangle\langle G_1^{(q)}|=\zero
 \qquad(t\in\Z_7).
 \label{eq:graphcross}
\end{equation}

Identify $\C^{2q}\cong\C^2\otimes\C^q$ party by party and set
\begin{equation}
 |\Psi_{2q}\rangle=\frac{|0_L\rangle|G_0^{(q)}\rangle+|1_L\rangle|G_1^{(q)}\rangle}{\sqrt2}.
 \label{eq:coupled}
\end{equation}
The two summands in Eq.~\eqref{eq:coupled} are orthogonal because
$\langle0_L|1_L\rangle=0$, while each basis state has flat amplitudes
and is normalized; hence $|\Psi_{2q}\rangle$ is normalized.  We use the
partywise identification
\begin{equation}
 |b\rangle_2|x\rangle_q\longleftrightarrow|qb+x\rangle_{2q},
 \qquad b\in\Z_2,
 \quad x\in\Z_q,
 \label{eq:localidentification}
\end{equation}
which is a local basis isomorphism and therefore preserves all reduced
spectra.  Using the reduction blocks in Eqs.~\eqref{eq:logicalblocks} and
\eqref{eq:graphblocks}, the three-party marginal is
\begin{equation}
 \rho_S=\frac12\sum_{a,b\in\Z_2}R_{ab}^S\otimes T_{ab}^S.
 \label{eq:block}
\end{equation}
Accorrding to Eqs.~\eqref{eq:binarydiag} and \eqref{eq:diagonalgraphblocks}, the diagonal contribution is
\begin{align}
 \frac12\bigl(R_{00}^S\otimes T_{00}^S+R_{11}^S\otimes T_{11}^S\bigr)
&=\frac12\bigl(R_{00}^S+R_{11}^S\bigr)\otimes\frac{I_{q^3}}{q^3}\notag\\
 &=\frac{I_8}{8}\otimes\frac{I_{q^3}}{q^3}
 =\frac{I_{(2q)^3}}{(2q)^3}.
 \label{eq:diagcombined}
\end{align}
The remaining contribution is the cross term
\begin{equation}
 \rho_S^{\mathrm{cross}}
 :=\frac12\left(
 R_{01}^S\otimes T_{01}^S
 +R_{10}^S\otimes T_{10}^S
 \right).
 \label{eq:cross-contribution}
\end{equation}
The $35$ three-subsets split into the $28$ non-Fano cuts and the seven
Fano cuts $L_t$.  If $S$ is non-Fano, the binary support argument above
gives
\begin{equation}
 R_{01}^S=R_{10}^S=\zero,
 \label{eq:binary-cross-zero}
\end{equation}
and hence $\rho_S^{\mathrm{cross}}=\zero$.  If $S=L_t$ is a
Fano cut, Eq.~\eqref{eq:graphcross} and Hermitian conjugation give
\begin{equation}
 T_{01}^{L_t}=T_{10}^{L_t}=\zero,
 \label{eq:odd-cross-zero}
\end{equation}
so again $\rho_S^{\mathrm{cross}}=\zero$.  Thus the cross
contribution vanishes for every three-subset $S$.  Combining this with
Eq.~\eqref{eq:diagcombined} yields
\begin{equation}
 \rho_S=\frac{I_{(2q)^3}}{(2q)^3}
 \qquad\text{for every }|S|=3,
 \label{eq:all-marginals-twiceodd}
\end{equation}
and therefore
\begin{equation}
 |\Psi_{2q}\rangle\ \text{is an }\AME(7,2q)\text{ state}
 \qquad(q\geq3\text{ odd}).
 \label{eq:twiceodd}
\end{equation}

Use the local basis $|b\rangle_2|x\rangle_q\leftrightarrow|qb+x\rangle_{2q}$.  Eqs.~\eqref{eq:Gsim}, \eqref{eq:shifted}, and \eqref{eq:coupled} give
\begin{align}
 |\Psi_{2q}\rangle
 &=\frac1{4q^{7/2}}
 \sum_{a\in\Z_2}\sum_{m\in\Z_2^3}\sum_{x\in\Z_q^7}
 \omega_q^{Q(x)+a c x^{\mathsf T}}\notag\\[-1mm]
 &\quad\times\bigotimes_{j=0}^6
 \left|q\bigl((mH)_j+a\bigr)+x_j\right\rangle_{2q}.
 \label{eq:explicit}
\end{align}
For $q=3, 5$, Eq.~\eqref{eq:explicit} constructs explicit $\AME(7,6)$ and $\AME(7,10)$, whose existence was open in \cite{AMEtable,RajchelReview2026}.

\textit{\textbf{Proof of the classification.}}\textbf{---}
Note that if $\left|\psi\right\rangle$ is an $\AME(7,d_1)$ state, and $\left|\varphi\right\rangle$ is an $\AME(7,d_2)$ state, then $\left|\psi\right\rangle\otimes \left|\varphi\right\rangle$ is an $\AME(7,d_1d_2)$ state \cite{Rains1999,k_uniform_masking}.
 Write $d=2^\alpha m$ with $m$ odd.  Eq.~\eqref{eq:oddresult} covers $\alpha=0$, and Eq.~\eqref{eq:twiceodd} covers $\alpha=1$.  For $\alpha=2$, we can apply the product property to the known $\AME(7,4)$ state~\cite{Raissi2020,Huang2021} and $\AME(7,m)$.  For $\alpha\geq3$, the resulting $[7,3,5]_{2^\alpha}$ generalized Reed-Solomon code gives a minimal-support $\AME(7,2^\alpha)$~\cite{macwilliams1977theory,Raissi2020}.  Combining it with the odd-dimensional state covers every remaining $d$.  Together with the nonexistence at $d=2$~\cite{Huber2017}, this proves

\begin{table}[t]
\caption{Coverage after writing $d=2^\alpha m$ with $m$ odd.}\label{table:all}
\centering\small
\setlength{\tabcolsep}{5.0pt}
\begin{ruledtabular}
\begin{tabular}{c c}
$\alpha$ & input construction\\
\hline
$0$ & cyclic state $|G_m\rangle$\\
$1$ & coupled state $|\Psi_{2m}\rangle$\\
$2$ & known $\AME(7,4)$ and product property\\
$\geq3$ & Reed--Solomon state and product property
\end{tabular}
\end{ruledtabular}
\end{table}

\begin{theorem}[Complete seven-party classification]
 An $AME(7,d)$ state exists if and only if 
$d\geq3$.
\end{theorem}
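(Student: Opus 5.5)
The plan is to assemble the classification from the constructions already established, splitting by the $2$-adic valuation of $d$. For necessity, $d=1$ is excluded by convention, since a single-level ``qudit'' carries no entanglement and the marginal condition $\rho_S=I/d^3$ is vacuous. The case $d=2$ is excluded by the nonexistence result of Ref.~\cite{Huber2017}. For sufficiency, write $d=2^\alpha m$ with $m$ odd and $d\geq3$, and follow Table~\ref{table:all}.

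The cases go as follows. If $\alpha=0$, then $m=d\geq3$ is odd and Eq.~\eqref{eq:oddresult} supplies $|G_m\rangle$ directly. If $\alpha=1$, then $d\geq3$ forces $m\geq3$, so Eq.~\eqref{eq:twiceodd} gives $|\Psi_{2m}\rangle$. If $\alpha\geq2$, the power-of-two factor is handled first. For $\alpha=2$ we take the known $\AME(7,4)$ state. For $\alpha\geq3$ we take the $[7,3,5]_{2^\alpha}$ generalized Reed--Solomon code, which exists because $7\leq 2^\alpha+1$ and is MDS; its associated minimal-support state is an $\AME(7,2^\alpha)$. If $m=1$ we are done. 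Otherwise $m\geq3$, and the tensor-product property combines the $\AME(7,2^\alpha)$ state with $|G_m\rangle$.

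It remains to justify the product property, which is the only step not already displayed. Take $|\psi\rangle\otimes|\varphi\rangle$, regrouped party by party so that $\C^{d_1}\otimes\C^{d_2}\cong\C^{d_1d_2}$. For any $S$ with $|S|=3$, the reduced state factorizes as $\Tr_{S^c}|\psi\rangle\langle\psi|\otimes\Tr_{S^c}|\varphi\rangle\langle\varphi| = I_{d_1^3}/d_1^3\otimes I_{d_2^3}/d_2^3 = I_{(d_1d_2)^3}/(d_1d_2)^3$, so the product is again AME.

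The main obstacle lies entirely in the earlier sections, not in this assembly: the correctness of the $\alpha=1$ coupled construction. There one must check that the cross blocks vanish on every cut, either from the binary side via the Fano support argument or from the odd side via the certificates in Table~\ref{tab:fanowitness}. Since those are established above, the remaining work here is only to make sure the case split has no gaps. In particular, every $d\geq3$ with $\alpha\leq1$ has $m\geq3$, which is exactly what Eqs.~\eqref{eq:oddresult} and \eqref{eq:twiceodd} require.
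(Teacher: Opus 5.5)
Your proposal is correct and follows the paper's own proof: the same split $d=2^\alpha m$ into the cases $\alpha=0$ (cyclic state), $\alpha=1$ (coupled state), $\alpha=2$ (known $\AME(7,4)$ plus the product property) and $\alpha\geq3$ (Reed--Solomon state plus the product property), with necessity resting only on the nonexistence at $d=2$ and the convention excluding $d=1$. The one-line derivation of the product property, which the paper only cites, and the check that $\alpha\leq1$ forces $m\geq3$, which the paper leaves implicit, are correct additions.
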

See Table~\ref{table:all} for a summary of the constructions. The preceding case split proves existence.  Necessity consists only of the
known nonexistence at $d=2$ \cite{Huber2017}; the local dimension $d=1$ is trivial and is
excluded from the usual AME convention.

For completeness, the input used when $\alpha\geq3$ can be written explicitly.  Put $d_0=2^\alpha\geq8$, choose distinct $\xi_0,\ldots,\xi_6\in\F_{d_0}$, and define
\begin{equation}
 |\Phi_{d_0}\rangle=d_0^{-3/2}\sum_{\beta_0,\beta_1,\beta_2\in\F_{d_0}}
 \bigotimes_{j=0}^{6}|\beta_0+\beta_1\xi_j+\beta_2\xi_j^2\rangle.
 \label{eq:MDSseed}
\end{equation}
The evaluation vectors form a $[7,3,5]_{d_0}$ generalized Reed--Solomon code $C$.  Since the dual distance of $C$ is $4$, $|\Phi_{d_0}\rangle$ is an $\AME(7,d_0)$ state with minimum support  \cite{Feng2017,Goyeneche2014,Raissi2020}.

\textit{\textbf{Conclusion and Discussion.}}\textbf{---}
We have established a complete existence classification of seven-qudit absolutely maximally entangled states, proving that they exist if and only if the local dimension satisfies $d\geq 3$. Together with the previously known results summarized in the AME table~\cite{AMEtable}, this provides a complete existence classification of $n$-qudit AME states for $2\leq n\leq 7$. The first remaining open case is the existence of an $\AME(8,4)$ state. Establishing a complete existence classification for the remaining cases with $n\geq 8$ is a fundamental open problem, whose resolution would provide deeper insights into the structure of multipartite entanglement.

\medskip 
\textit{\textbf{Acknowledgments.}}\textbf{---}
F. S. acknowledges funding from the 2026 Basic Start‑up Fund of Sun Yat‑sen University (Grant No. 67000‑12266020). X. Z. acknowledges funding from the Innovation Program for Quantum
Science and Technology under Grant 2021ZD0302902, in part by
NSFC under Grant 12171452 and Grant 12231014, and in part
by the National Key Research and Development Programs of
China under Grant 2023YFA1010200. Q. Z. acknowledges funding from Innovation Program for Quantum Science and Technology via Project 2024ZD0301900, National Natural Science Foundation of China (NSFC) via Project No. 12347104 and No. 12305030, Guangdong Basic and Applied Basic Research Foundation via Project 2023A1515012185, Hong Kong Research Grant Council (RGC) via No. 27300823, N\_HKU718/23, and R6010-23, Guangdong Provincial Quantum Science Strategic Initiative No. GDZX2303007, HKU Seed Fund for Basic Research for New Staff via Project 2201100596. L. L. acknowledges funding from the National Key Research and Development Program of China (Grant No.2024YFB4504004), the National Natural Science Foundation of China (Grant No. 92465202, 62272492), the Guangdong Provincial Quantum Science Strategic Initiative (Grant No. GDZX2303007, GDZX2503001), the Guangzhou Science and Technology Program (Grant No. 2024A04J4892).

The construction of $\AME(7,2q)$ for odd $q$ was developed with the assistance of GPT-5.6. The authors independently verified the argument, checked all details, and take full responsibility for the correctness of the results and the manuscript.

\bibliographystyle{apsrev4-2}
\bibliography{reference}
\end{document}